\documentclass[journal,letterpaper,twocolumn,twoside,nofonttune]{IEEEtran}

%\iffalse
%\documentclass[conference,a4paper]{IEEEtranTCOM}
%% depending on your installation, you may wish to adjust the top margin:
%\addtolength{\topmargin}{9mm}

%\usepackage{geometry}
%\geometry{
%total={210mm,297mm},
%	left=0.6in,
%	right=0.6in,
%	top=0.7in,
%	bottom=0.65in,
%	\linespread{0.91}
%	}
%\fi

%\parindent   0.15in
%\linespread{0.91}
%\setlength{\textheight}{9.5in}
%\pagestyle{plain}

%\usepackage{layout}
%\setlength{\textheight}{692pt}

\usepackage[utf8]{inputenc} 
\usepackage[T1]{fontenc}
\usepackage{url}
\usepackage{ifthen}
\usepackage{cite}
\usepackage{amsmath}%
\usepackage{wasysym}%
\usepackage{amssymb}

\usepackage{mdwmath}
\usepackage{blindtext}
\usepackage{eqparbox}
\DeclareMathOperator{\E}{\mathbb{E}}

\usepackage[english]{babel}
\usepackage{lipsum}
\usepackage{xcolor}
\usepackage{tikz}
\usepackage{graphicx}
%\definecolor{ocre}{RGB}{52,177,201}
%
%\renewcommand{\qedsymbol}{\textcolor{ocre}{$\blacksquare$}}
\usepackage{algorithm, algorithmic}
\usepackage{siunitx}

\usepackage{calc}
\newcolumntype{M}[1]{>{\centering\arraybackslash}m{#1}}
\newcolumntype{N}{@{}m{0pt}@{}}

\hyphenation{op-tical net-works semi-conduc-tor}

%\usepackage{geometry}
%\geometry{
%	total={210mm,297mm},
%	left=0.75in,
%	right=0.75in,
%	top=0.75in,
%	bottom=0.62in,
%}

%\usepackage{geometry}
%\geometry{
%	paper=a4paper,
%	margin=54pt,
%}
%
%%\parindent   0.15in
%%\linespread{0.99}
%%\setlength{\textheight}{9.5in}
%%\pagestyle{plain}
%
%\pretolerance=50
%\tolerance=100
%\hyphenpenalty=1
%
%\clubpenalty=300
%\widowpenalty=300
%\displaywidowpenalty=100
%
%\hbadness = 10000
%\vbadness 10000
%
%\hfuzz = 3.0pt
%\def\fontsubfuzz{1.0pt}
%
%\flushbottom

\usepackage{amsthm}

\usepackage{calc}
\newcolumntype{M}[1]{>{\centering\arraybackslash}m{#1}}
\newcolumntype{N}{@{}m{0pt}@{}}

\newtheorem{theorem}{{Theorem}}
\newtheorem{lemma}[theorem]{{Lemma}}
\newtheorem{proposition}[theorem]{{Proposition}}
\newtheorem{corollary}[theorem]{{Corollary}}

\newcommand{\cR}{{\cal R}}

\DeclareMathAlphabet{\mathbfsl}{OT1}{ppl}{b}{it} %{OT1}{cmr}{bx}{it}

%------------------ Various \newcommand Declarations -------------------

%---> Math environments ---------- 

 % for a filled box
% V1.6 some journals use an open box instead that will just fit around a closed one

 % default to closed

\newcommand{\be}[1]{\begin{equation}\label{#1}}
\newcommand{\ee}{\end{equation}} 
\newcommand{\eq}[1]{(\ref{#1})}

%---> Changing style of inequalities ------

\renewcommand{\leq}{\leqslant}
 
\renewcommand{\geq}{\geqslant}

%---> Script sets, ect ---------

\renewcommand{\Bbb}{\mathbb}

%---> Font definitions ---------- 

%---> References to Theorems, etc. --- 

\newcommand{\Tref}[1]{Theo\-rem\,\ref{#1}}
\newcommand{\Pref}[1]{Pro\-po\-si\-tion\,\ref{#1}}
\newcommand{\Lref}[1]{Lem\-ma\,\ref{#1}}
\newcommand{\Cref}[1]{Co\-ro\-lla\-ry\,\ref{#1}}

%---> Fields, ect ---------- 

\newcommand{\Fq}{{{\Bbb F}}_{\!q}}

%---> Various useful things ---------- 

%---> Things that save typing -------- 

% ------------------------------------------------------------
\IEEEoverridecommandlockouts
\begin{document}
	\setlength{\abovedisplayskip}{4pt}
	\setlength{\belowdisplayskip}{4pt}
\title{Coded Distributed Computing:\\ Performance Limits and Code Designs} 

% %%% Single author, or several authors with same affiliation:
 \author{
\IEEEauthorblockN{Mohammad Vahid Jamali, Mahdi Soleymani, and Hessam Mahdavifar\\}
   \IEEEauthorblockA{Department of Electrical Engineering and Computer Scienc, University of Michigan, Ann Arbor, MI 48109, USA\\
   }
   \IEEEauthorblockA{E-mails: \{mvjamali, mahdy, hessam\}@umich.edu 
   }
}

% \vspace{-0.27in}

 \maketitle
\begin{abstract}
We consider the problem of coded distributed computing where a large linear computational job, such as a matrix multiplication, is divided into $k$ smaller tasks, encoded using an $(n,k)$ linear code, and performed over $n$ distributed nodes. The goal is to reduce the average execution time of the computational job. We provide a connection between the problem of characterizing the average execution time of a coded distributed computing system and the problem of analyzing the error probability of codes of length $n$ used over erasure channels. Accordingly, we present closed-form expressions for the execution time using binary random linear codes and the best execution time any linear-coded distributed computing system can achieve. It is also shown that there exist \textit{good} binary linear codes that attain, asymptotically, the best performance any linear code, not necessarily binary, can achieve. We also investigate the performance of coded distributed computing systems using
% two of the coding schemes which recently attained extensive attention in the coding theory literature, namely, 
 polar and Reed-Muller (RM) codes that can benefit from low-complexity decoding, and superior performance, respectively, as well as explicit constructions. The proposed framework in this paper can enable efficient designs of distributed computing systems given the rich literature in the channel coding theory. 
\end{abstract}

\section{Introduction}
There has been increasing interest in recent years toward applying ideas from coding theory to improve the performance of various computation, communication, and networking applications. For example, ideas from repetition coding has been applied to several setups in computer networks, e.g., by running a request over multiple servers and waiting for the first completion of the request by discarding the rest of the request duplicates \cite{ananthanarayanan2013effective,vulimiri2013low,gardner2015reducing}. Another direction is to investigate the application of coding theory in cloud networks and distributing computing systems \cite{jonas2017occupy,lee2018speeding}.
A rule of thumb is that when the computational job consists of linear operations, coding techniques can be applied to improve the run-time performance of the system under consideration.

Distributed computing refers to the problem of performing a large computational job over many, say $n$, nodes with limited processing capabilities. A coded computing scheme aims to divide the job to $k<n$ tasks and then to introduce $n-k$ redundant tasks using an $(n,k)$ code, in order to alleviate the effect of slower nodes, also referred to as \textit{stragglers}. In such a setup, it is assumed that each node is assigned one task and hence, the total number of \textit{encoded tasks} is $n$ equal to the number of nodes. 

%Recently, there has been increasing attention toward 
Recently, there has been extensive research activities to
leverage coding schemes in order to boost the performance of distributed computing systems \cite{lee2018speeding,li2016unified,reisizadeh2017coded,li2016coded,li2017coding,yang2017computing,lee2017high,dutta2016short,vulimiri2013low,wang2018coded,mallick2018rateless}. For example, \cite{lee2018speeding} has applied coding theory to combat the deteriorating effects of stragglers in matrix multiplication and data shuffling. The authors in \cite{reisizadeh2017coded} considered coded distributed computing in heterogeneous clusters consisting of servers with different computational capabilities.
%They further proposed an asymptotically optimal algorithm for distributed matrix multiplication over heterogeneous clusters.  However,  their decoding procedure require taking the inverse of very large matrices, comparable to the size of the total job, which inherently imposes severe processing time. 

Most of the work in the literature focus on the application of maximum distance separable (MDS) codes. However, encoding and decoding of MDS codes over real numbers, especially when the number of servers is large, e.g., more than $100$, face several barriers, such as numerical stability, and decoding complexity. In particular, decoding of MDS codes is not robust against unavoidable rounding errors when used over real numbers \cite{higham2002accuracy}. Employing large finite fields, e.g., coded matrix multiplication using \textit{polynomial codes} in \cite{yu2017polynomial}, can be an alternative approach. However, applying large finite fields imposes further numerical barriers due to quantization when used over real-valued data.

As we will show in Section III, MDS codes are \textit{theoretically} optimal in terms of minimizing the average execution time of any linear-coded distributed computing system. However, as discussed above,
%they are \textit{not practical}, 
their application comes with some practical impediments,
either when used over real-valued inputs or large finite fields, in most of distributed computing applications comprised of large number of local nodes. A sub-optimal yet practically interesting approach is to apply binary linear codes, consisting of $0$'s  and $1$'s, and then perform the computation over real values. In this case, there is no need for the quantization as a zero in the $(i,j)$-th element of the generator matrix of the binary linear code means that the $i$-th task is not included in the $j$-th encoded task sent to the $j$-th node while a one means it is included. To this end, in this paper, we consider $(n,k)$ binary linear codes where all computations are performed over real-valued data inputs. A related work to this model is the very recent work in \cite{bartan2019polar} where binary polar codes are applied for distributed matrix multiplication. The authors in \cite{bartan2019polar} justify the application of binary codes over real-valued data and provide a decoding algorithm using polar decoder.
%However, they do not properly take into account the  important aspects of distributed computing systems, different than that of channel coding counterpart, and do not properly characterize the system performance.
%However, some of the aspects of distributed computing systems, different than that of channel coding counterpart, are not clearly taken into account in \cite{bartan2019polar}.

%In this work, we apply $(n,k)$ binary linear codes for real-valued distributed computation. 
In this work, we connect the problem of characterizing the average execution time of any coded distributed computing system to the error probability of the underlying coding scheme over $n$ uses of erasure channels (see \Lref{lemma1}). Using this connection, we characterize the performance limits of distributed computing systems such as the average execution time that any linear code can achieve (see \Tref{BLC}), the average job completion time using binary random linear codes (see \Cref{RC}), and the best achievable average execution time of any linear code (see \Cref{col5}) that can, provably, be attained using MDS codes requiring operations over large finite fields. Moreover, we study the  gap between the average execution time of binary random linear codes and the optimal performance (see \Tref{thm7}) showing the normalized gap approaches zero as $n \rightarrow \infty$ (see \Cref{col8}). This implies that there exist binary linear codes that attain, asymptotically, the best performance any linear code, not necessarily binary, can achieve. We further study the performance of coded distributed computing systems using
%two of the coding schemes which recently attained extensive attention in the coding theory literature, namely, 
polar and Reed-Muller (RM) codes
that can benefit from low-complexity decoding and superior performance, respectively. 
%The proposed framework in this paper can enable efficient design of distributed computing systems given the rich literature in the channel coding theory. 

%The rest of the paper is organized as follows. In Section II we provide the system model and briefly explain how the system of $n$ independent distributed servers can be viewed as $n$ independent uses of erasure channels. In Section III, we connect the problem of coded computation to the traditional channel coding problem and characterize some fundamental results. We present comprehensive numerical results and investigate the performance of polar- and RM-coded distributed computing systems in Section IV, and conclude the paper in Section V. , referred to as a homogeneous system. 
\section{System Model}
We consider a distributed computing system consisting of $n$ local nodes with the same computational capabilities. The run time $T_i$ of each local node $i$ is modeled using a shifted-exponential random variable (RV), mainly adopted in the literature \cite{liang2014tofec,reisizadeh2017coded,lee2018speeding}. Then, when the computational job is equally divided to $k$ tasks, the cumulative distribution function (CDF) of $T_i$ is given by
\begin{align}\label{shifted-exp}
\Pr(T_i\leq t)=1-\exp\left(-\mu(kt-1)\right),~~~\forall ~\!t\geq 1/k,
\end{align}
where $\mu$ is the exponential rate of each local node, also called the straggling parameter. Using \eqref{shifted-exp} one can observe that the probability of the task assigned to the $i$-th server not being completed (equivalent to erasure) until time $t\geq1/k$ is
\begin{align}\label{epst}
\epsilon(t)\triangleq\Pr(T_i>t)=\exp\left(-\mu(kt-1)\right),
\end{align}
and is one for $t<1/k$. Therefore, given any time $t$, the problem of computing $k$ parts of the computational job over $n$ servers can be interpreted as the traditional problem of transmitting $k$ symbols, using an $(n,k)$ code, over $n$ independent-and-identically-distributed (i.i.d.) erasure channels. Note that the form of the CDF in \eqref{shifted-exp} suggests that $t_0\triangleq 1/k$ is the (normalized) deterministic time required for each server to process its assigned $1/k$ portion of the total job (all tasks are erased before $t_0$), while any time elapsed after $t_0$ refers to the stochastic time as a result of servers' statistical behavior (tasks are not completed with probability $\epsilon(t)$ for $t\geq t_0$).

Given a certain code and a corresponding decoder over erasure channels, a \textit{decodable} set of tasks refers to a pattern of unerased symbols resulting in a successful decoding with probability $1$. Then, $P_e(\epsilon,n)$ is defined as the probability of decoding failure over an erasure channel with erasure probability $\epsilon$. For instance, $P_e(\epsilon,1) = \epsilon$ for a $(1,1)$ code. Note that the reason to keep $n$ in the notation is to specify that the number of servers, when the code is used in distributed computation, is also $n$. Finally, the total job completion time $T$ is defined as the time at which a decodable set of tasks/outputs is obtained from the servers.

\section{Fundamental Limits}
%\subsection{Non-Asymptotic Analysis}
%Using the system model described in Section II, we have the following lemma for the average execution time of any homogeneous distributed computing scheme.
The following Lemma connects the average execution time of {any} linear-coded distributed computing system to the error probability of the underlying coding scheme over $n$ uses of an erasure channel. 

\begin{lemma}\label{lemma1}
	The average execution time of a linear-coded distributed computing system using a given $(n,k)$ code can be characterized as
%	\begin{align}\label{ET}
%	T_{\rm avg}=\E[T]&\stackrel{(a)}{=}\int_{0}^{\infty}P_e(\tau,n)d\tau\nonumber\\
%	&\stackrel{(b)}{=}\frac{1}{k}+\frac{1}{\mu k}\int_{0}^{1}\frac{P_e(\epsilon,n)}{\epsilon}d\epsilon,
%	\end{align}
	\begin{align}
T_{\rm avg}\triangleq\E[T]&=\int_{0}^{\infty}P_e(\epsilon(\tau),n)d\tau\label{ET1}\\
&=\frac{1}{k}+\frac{1}{\mu k}\int_{0}^{1}\frac{P_e(\epsilon,n)}{\epsilon}d\epsilon\label{ET2},
\end{align}
where $\epsilon(\tau)$ is defined in \eqref{epst}.
%where $T$ is the job completion time,
	%i.e., the time for which a decodable set of outputs is received,
%	and $P_e(\epsilon,n)$ is the error probability of an $(n,k)$ coding scheme over an erasure channel with erasure probability $\epsilon$.
\end{lemma}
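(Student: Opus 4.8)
The plan is to identify the complementary CDF (tail) of the completion time $T$ with the decoding-failure probability $P_e(\epsilon(t),n)$ at each fixed time, and then to turn the resulting tail integral into the two stated forms.

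First I would establish the key identity $\Pr(T>t)=P_e(\epsilon(t),n)$. The central observation is monotonicity of task completion: once the $i$-th node finishes its assigned task it stays finished, so the set of completed (unerased) tasks only grows with $t$. Hence the first instant at which this set becomes decodable is exactly $T$, and for any fixed $t$ the event $\{T>t\}$ coincides with the event that the pattern of tasks completed by time $t$ is \emph{not} decodable. Moreover, at a fixed $t$ each task $i$ is still incomplete with probability $\epsilon(t)=\Pr(T_i>t)$ from \eqref{epst}, independently across the $n$ nodes since the $T_i$ are i.i.d. Therefore the completion/erasure pattern at time $t$ is distributed exactly as the output of $n$ i.i.d.\ erasure channels with erasure probability $\epsilon(t)$, and by the definition of $P_e(\cdot,n)$ as the decoding-failure probability under this model we get $\Pr(T>t)=P_e(\epsilon(t),n)$.

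Next I would invoke the standard tail formula for a nonnegative random variable, $\E[T]=\int_{0}^{\infty}\Pr(T>t)\,dt$, and substitute the identity above to obtain \eqref{ET1} directly. To derive \eqref{ET2} I would split this integral at $t_0=1/k$. For $t<1/k$ every task is necessarily erased, so $\epsilon(t)=1$ and $P_e(1,n)=1$ (full erasure is never decodable), contributing $\int_{0}^{1/k}1\,dt=1/k$. On $[1/k,\infty)$ I would apply the change of variables $\epsilon=\epsilon(t)=\exp(-\mu(kt-1))$, which is a smooth strictly decreasing bijection from $[1/k,\infty)$ onto $(0,1]$ with $d\epsilon=-\mu k\,\epsilon\,dt$; this converts $\int_{1/k}^{\infty}P_e(\epsilon(t),n)\,dt$ into $\frac{1}{\mu k}\int_{0}^{1}\frac{P_e(\epsilon,n)}{\epsilon}\,d\epsilon$, and adding the two pieces yields \eqref{ET2}.

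I expect the main conceptual step to be justifying $\Pr(T>t)=P_e(\epsilon(t),n)$: specifically the monotonicity argument that equates the ``time-to-decodability'' variable $T$ with the tail of the pointwise decoding-failure event, together with the observation that at each fixed $t$ the completion pattern is genuinely i.i.d.\ Bernoulli with parameter $\epsilon(t)$, so that the decoder's success region matches the one implicit in the definition of $P_e$. Once this is in place, the passage from \eqref{ET1} to \eqref{ET2} is a routine integral split and substitution.
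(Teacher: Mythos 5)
Your proof is correct and follows essentially the same route as the paper's: the tail formula $\E[T]=\int_0^\infty \Pr(T>\tau)\,d\tau$, the identification $\Pr(T>\tau)=P_e(\epsilon(\tau),n)$ via the i.i.d.\ erasure-channel interpretation, and the change of variables $d\epsilon=-\mu k\,\epsilon\,d\tau$ together with $P_e(\epsilon(\tau),n)=1$ for $\tau\leq 1/k$ to pass from \eqref{ET1} to \eqref{ET2}. If anything, your monotonicity argument for why $\{T>t\}$ coincides with the non-decodability event at time $t$ makes explicit a step the paper leaves implicit.
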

\begin{proof}
It is well-known that the expected value of any RV $T$ is related to its CDF $F_T(\tau)$ as $\E[T]=\int_{0}^{\infty}(1-F_T(\tau))d\tau$. Note that $1-F_T(\tau)=\Pr(T>\tau)$ is the probability of the event that the job is not completed until some time $\tau$. Therefore, using the system model in Section II, we can interpret $\Pr(T>\tau)$ as the probability of decoding failure $P_e(\epsilon(\tau),n)$ of the code when used over $n$ i.i.d. erasure channels with the erasure probability $\epsilon(\tau)$. This completes the proof of \eqref{ET1}. Now given that for the shifted-exponential distribution $d\epsilon(\tau)/d\tau=-\mu k\epsilon(\tau)$, and that $P_e(\epsilon(\tau),n)=1$ for all $\tau\leq1/k$, we have \eq{ET2} by the change of variables.
\end{proof}

%\noindent \textbf{Remark 1.} Eq. \eqref{ET1} is true for any linear-coded distributed computing system, either homogeneous or heterogeneous\footnote{By a \textit{heterogeneous} computing system we mean that the servers can have different processing capabilities, e.g., different sets of $\mu$'s \cite{reisizadeh2017coded}.}, and with any distribution for the run time of the servers, while \eqref{ET2} is obtained for homogeneous distributed computing systems under shifted-exponential distribution and can be extended to other distributions in a similar approach. 
\noindent \textbf{Remark 1.} Note that \eqref{ET1} holds given any model for the distribution of the run time of the servers, while \eqref{ET2} is obtained under shifted-exponential distribution, with servers having a same straggling parameter $\mu$, and can be extended to other distributions in a similar approach. 

\begin{theorem}
%[Average Execution Time of Linear Codes]
\label{BLC}
	%	The average execution time of distributed computing systems using binary random linear codes can be expressed as \eqref{RC-ET} shown at the top of this page where 
	The average execution time of any linear-coded distributed computing system can be expressed as 
	\begin{align}\label{BL}
	T_{\rm avg}=\frac{1}{k}\left[1+\sum_{i=n-k+1}^{n}\frac{1}{i\mu}\right]+\frac{1}{\mu k}\sum_{i=1}^{n-k}\frac{1}{i}p(i,k),
	\end{align} 
%	where $p(i,k)$ is the probability (in the case of deterministic generator matrices it should be interpreted as the normalized number of possibilities) of having a row rank smaller than $k$ for the generator matrix after dropping/erasing $i$ columns uniformly at random. 
	where $p(i,k)$
%	, in the case of deterministic generator matrices,
%	is  the number of possibilities of having a row rank smaller than $k$ for the generator matrix after dropping/erasing $i$ columns uniformly at random, normalized to the total $\begin{pmatrix}
%	n\\i
%	\end{pmatrix}$ choices of $i$ columns.
	is  the average conditional probability of decoding failure, for an underlying decoder, given $i$ encoded symbols are erased at random. 
\end{theorem}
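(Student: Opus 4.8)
The plan is to start from the integral representation \eqref{ET2} in \Lref{lemma1} and evaluate the integral by decomposing the block failure probability according to the number of erased symbols. Since the $n$ symbols pass through independent erasure channels each with erasure probability $\epsilon$, the number of erased symbols is binomially distributed, and conditioning on it gives
\begin{align}
P_e(\epsilon,n)=\sum_{i=0}^{n}\binom{n}{i}\epsilon^i(1-\epsilon)^{n-i}\,p(i,k),
\end{align}
where $p(i,k)$ is precisely the average conditional failure probability in the theorem statement: given that $i$ symbols are erased, the i.i.d. channel makes every $i$-subset of positions equally likely, so averaging over those $\binom{n}{i}$ patterns returns $p(i,k)$. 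Two boundary facts fix the extremes of this sum. With no erasures decoding always succeeds, so $p(0,k)=0$; and once $i\ge n-k+1$ there remain at most $k-1<k$ unerased symbols, whose corresponding generator-matrix columns have rank below $k$ and therefore cannot determine the $k$ information symbols of \emph{any} $(n,k)$ linear code, so $p(i,k)=1$ on that range.

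Next I would substitute this expansion into \eqref{ET2} and interchange the finite sum with the integral. The $i=0$ term drops out because $p(0,k)=0$, which is also what makes the integrand $P_e(\epsilon,n)/\epsilon$ integrable at the origin rather than exhibiting a $1/\epsilon$ singularity. Each remaining term leaves a Beta integral
\begin{align}
\int_0^1\epsilon^{\,i-1}(1-\epsilon)^{\,n-i}\,d\epsilon=\frac{(i-1)!\,(n-i)!}{n!},
\end{align}
and the crux of the computation is that premultiplying by $\binom{n}{i}$ collapses all the factorials to a single term,
\begin{align}
\binom{n}{i}\frac{(i-1)!\,(n-i)!}{n!}=\frac{1}{i},
\end{align}
so that $\int_0^1 P_e(\epsilon,n)/\epsilon\,d\epsilon=\sum_{i=1}^{n}\frac{1}{i}\,p(i,k)$.

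Finally I would split this sum at $i=n-k$. On the upper range $n-k+1\le i\le n$ the boundary fact $p(i,k)=1$ turns that part into the harmonic tail $\sum_{i=n-k+1}^{n}1/i$, while the lower range $1\le i\le n-k$ keeps the coefficients $p(i,k)$ intact. Substituting back into \eqref{ET2} and absorbing the $\sum_{i=n-k+1}^{n}1/(i\mu)$ term together with the leading $1/k$ reproduces exactly the claimed expression \eqref{BL}.

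I expect the only genuinely substantive steps to be the algebraic collapse of the binomial coefficient and Beta function into $1/i$, and the justification of the two boundary values of $p(i,k)$ (most importantly $p(i,k)=1$ for $i\ge n-k+1$, which rests only on a dimension count and hence holds for every $(n,k)$ linear code and every decoder). Everything else is the routine combination of \Lref{lemma1}, a binomial expansion, a Beta-function evaluation, and a harmonic-sum split.
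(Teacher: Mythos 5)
Your proposal is correct and follows essentially the same route as the paper's proof: the same binomial decomposition of $P_e(\epsilon,n)$ by conditioning on the number of erasures, the same boundary fact $p(i,k)=1$ for $i>n-k$, and the same integral evaluation plugged back into \eqref{ET2}. The only cosmetic difference is that you evaluate $\int_0^1\epsilon^{i-1}(1-\epsilon)^{n-i}d\epsilon$ directly as a Beta integral, whereas the paper obtains the identical value $1/\bigl(i\binom{n}{i}\bigr)$ via an integration-by-parts recursion; your explicit remarks that $p(0,k)=0$ and that this removes the $1/\epsilon$ singularity at the origin are a welcome clarification the paper leaves implicit.
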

\iffalse
\begin{proof}
The optimal maximum a posteriori (MAP) decoder can recover the $k$ input symbols given $n-i$ unerased encoded symbols if the corresponding $k\times(n-i)$ sub-matrix of the generator matrix of the code is full row rank. Given the definition of $p(i,k)$, this can happen with probability $1-p(i,k)$. Therefore, $P_e(\epsilon,n)$, defined in Lemma 1, can be characterized for linear codes with MAP detection as
\begin{align}\label{BL-MAP}
P^{\rm LC}_e(\epsilon,n)=\sum_{i=1}^{n}\begin{pmatrix}
n\\i
\end{pmatrix}\epsilon^{i}(1-\epsilon)^{n-i}p(i,k).
\end{align}
Accordingly, characterizing $T_{\rm avg}$ requires the calculation of integrals of the form $f_i\triangleq\int_{0}^{1}\epsilon^{i-1}(1-\epsilon)^{n-i}d\epsilon$ for $i=1,2,...,n$. Using part-by-part integration we can find the recursive relation $f_{i+1}=\frac{i}{n-i}f_i$ which, after some mathematical manipulations, results in $1/f_i={i\begin{pmatrix}
	n\\i
	\end{pmatrix}}$. Note that $p(i,k)=1$ for $i>n-k$ since we never can extract the $k$ parts of the original job from less than $k$ number of equations. Now, by substituting \eqref{BL-MAP} in \eqref{ET2} we can find $T_{\rm avg}^{\rm LC}$ as \eqref{BL}. 
\end{proof}
\fi
\begin{proof}
Using the law of total probability and the definition of $p(i,k)$ we have
\begin{align}\label{BL-MAP}
P_e(\epsilon,n)=\sum_{i=1}^{n}\begin{pmatrix}
n\\i
\end{pmatrix}\epsilon^{i}(1-\epsilon)^{n-i}p(i,k).
\end{align}
Accordingly, characterizing $T_{\rm avg}$ requires computing integrals of the form $f_i\triangleq\int_{0}^{1}\epsilon^{i-1}(1-\epsilon)^{n-i}d\epsilon$ for $i=1,2,...,n$. Using part-by-part integration one can find the recursive relation $f_{i+1}=\frac{i}{n-i}f_i$ which results in $1/f_i={i\begin{pmatrix}
	n\\i
	\end{pmatrix}}$. Note that $p(i,k)=1$ for $i>n-k$, since one cannot extract the $k$ parts of the original job from less than $k$ encoded symbols. Then plugging \eqref{BL-MAP} into \eqref{ET2} leads to \eqref{BL}.
\end{proof}

Next, we characterize the average execution time using a random ensemble of binary linear codes with full-rank generator matrices. This random ensemble, denoted by $\cR(n,k)$, is obtained by picking entries of the $k \times n$ generator matrix independently and uniformly at random followed by removing those matrices not having a full row rank from the ensemble. 

\noindent\textbf{Remark 2.} Note that \eqref{BL-MAP} together with the integral form in \eqref{ET2} suggest that a coded computing system should always encode with a full-rank generator matrix, otherwise, the average execution time does not converge. This is the reason behind picking the particular ensemble described above. Note that this is in contrast with the conventional block coding, where we can get an arbitrarily small average probability of error over a random ensemble of all $k \times n$ binary generator matrices.

\iffalse
\begin{lemma}
\label{lemma2}
%[$p(i,k)$ for Binary Random Linear Codes]
Consider the random ensemble $\cR(n,k)$ under maximum a posteriori (MAP) decoder. Then for $i=1,2,\dots,n$, the average of $p(i,k)$, defined in Theorem 2, can be expressed as
\begin{align}\label{RCpe}
P_f\triangleq\ 1-\frac{\prod_{j=1}^{k}\left(1-2^{j-1-n+i}\right)}{\prod_{l=1}^{k}\left(1-2^{l-1-n}\right)}.
\end{align}
\end{lemma}
\fi
\begin{lemma}
\label{lemma2}
%[$p(i,k)$ for Binary Random Linear Codes]
The probability that the generator matrix of a code picked from $\mathcal{R}(n,k)$ does not remain full row rank after erasing $i$ columns uniformly at random, denoted by $p_f(i,k)$, can be expressed as
\begin{align}\label{RCpe}
p_f(i,k)= 1-\frac{\prod_{j=1}^{k}\left(1-2^{j-1-n+i}\right)}{\prod_{l=1}^{k}\left(1-2^{l-1-n}\right)}.
\end{align}
\end{lemma}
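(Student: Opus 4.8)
The plan is to reduce the statement to a ratio of counts of full--row--rank binary matrices. Model a code drawn from $\mathcal{R}(n,k)$ as a uniformly random $k\times n$ binary matrix $G$ conditioned on the event $A$ that $G$ has full row rank $k$. Erasing $i$ columns uniformly at random retains a uniformly random $(n-i)$-subset $S$ of the columns, chosen independently of $G$; write $G_S$ for the resulting $k\times(n-i)$ submatrix. Then $p_f(i,k)=\Pr(G_S \text{ not full rank}\mid A)$, with the probability taken over both $G$ and $S$.

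First I would invoke column exchangeability: since the entries of $G$ are i.i.d.\ uniform and $A$ is invariant under permuting columns, conditioning on $A$ leaves the columns exchangeable, so $S$ may be fixed to be any particular $(n-i)$-subset, say the first $n-i$ columns. Thus it suffices to analyze the fixed submatrix $G'$ consisting of the first $n-i$ columns and to compute $\Pr(B\mid A)$, where $B$ is the event that $G'$ has full row rank.

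The key simplification is the containment $B\subseteq A$: if the submatrix $G'$ already has row rank $k$, then the full matrix $G$, which contains those columns, must also have row rank $k$. Hence $A\cap B=B$ and $\Pr(B\mid A)=\Pr(B)/\Pr(A)$, so $p_f(i,k)=1-\Pr(B)/\Pr(A)$. It then remains only to evaluate the two unconditional probabilities.

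For this I would use the standard count of full--row--rank matrices over $\mathbb{F}_2$. Building the rows of a uniform $k\times m$ binary matrix one at a time, the $t$-th row must avoid the $2^{t-1}$-element span of the previous rows, giving $\Pr(\text{full rank})=\prod_{j=0}^{k-1}(1-2^{j-m})$. Applying this with $m=n$ yields $\Pr(A)=\prod_{l=1}^{k}(1-2^{l-1-n})$ and with $m=n-i$ yields $\Pr(B)=\prod_{j=1}^{k}(1-2^{j-1-n+i})$; their ratio is exactly the claimed expression, completing the proof. The only real subtlety is the exchangeability step, which lets us replace the random erasure pattern by a fixed column subset; everything else is the elementary full--rank count together with the clean containment $B\subseteq A$ that collapses the conditional probability.
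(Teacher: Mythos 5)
Your proof is correct and follows essentially the same route as the paper's: both exploit the containment (submatrix full rank implies full matrix full rank) to collapse the conditional probability into a ratio $\Pr(B)/\Pr(A)$, then evaluate both terms with the standard full-rank count $\prod_{j=1}^{k}(1-2^{j-1-m})$. The only difference is that you make the column-exchangeability step explicit, which the paper leaves implicit when it applies the unconditional full-rank formula to the randomly erased submatrix.
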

\begin{proof}
Define $l(m,k)$ as the probability of $k$ binary uniform random vectors $\mathbf{v}_i\in\mathbb{F}_2^m$ being linearly independent. It is well-known that 
	\begin{align}\label{Pf}
	l(m,k)=\prod_{i=1}^{k}\left(1-2^{i-1-m}\right).
	\end{align}
Let $\tilde{\mathbf{G}}$ denote the  $k\times (n-i)$ matrix after removing $i$ columns of the $k\times n$ generator matrix $\mathbf{G}$ uniformly at random. 
%Now, given the definition of $p^{\rm RC}(i,k)$ we have
Then
\begin{align}
p_f(i,k)&=\Pr\!\left(\big\{\mathrm{rank}(\tilde{\mathbf{G}})\neq k\big\}{\big |}\big\{\mathrm{rank}({\mathbf{G}})=k\big\}\right)\label{eq1}\\
&=1-\frac{\Pr\!\big(\big\{\mathrm{rank}(\tilde{\mathbf{G}})= k\big\}\big)}{\Pr\left(\big\{\mathrm{rank}({\mathbf{G}})=k\big\}\right)}\label{eq2}\\
%&=1-\frac{l(n-i,k)}{l(n,k)}\label{eq3}\\
&=1-\frac{\prod_{j=1}^{k}\left(1-2^{j-1-n+i}\right)}{\prod_{l=1}^{k}\left(1-2^{l-1-n}\right)},\label{eq4}
\end{align}
where \eqref{eq1} is by the definition of $p_f(i,k)$, \eqref{eq2} is by noting that $\Pr\!\big(\big\{\mathrm{rank}({\mathbf{G}})= k\big\}{\big |}\big\{\mathrm{rank}(\tilde{\mathbf{G}})=k\big\}\big)=1$, and \eqref{eq4} is by \eqref{Pf}.
\end{proof}
\begin{corollary}
%[Random Coding Average Execution Time]
\label{RC}
	The average execution time using random linear codes from the ensemble $\cR(n,k)$ under maximum a posteriori (MAP) decoding is given by \eqref{BL} while replacing $p(i,k)$ in \eqref{BL} by $p_f(i,k)$, characterized in Lemma 3.
\end{corollary}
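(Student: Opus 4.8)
The plan is to combine the two preceding results mechanically. Corollary~\ref{RC} is essentially a bookkeeping statement: Theorem~\ref{BLC} already gives a closed-form expression \eqref{BL} for the average execution time $T_{\rm avg}$ in terms of the quantity $p(i,k)$, which is defined as the average conditional probability of decoding failure given that $i$ encoded symbols are erased at random. The entire content of the corollary is to identify this $p(i,k)$, in the specific case of the random ensemble $\cR(n,k)$ under MAP decoding, with the explicit quantity $p_f(i,k)$ computed in Lemma~\ref{lemma2}.

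First I would recall that, under MAP (equivalently, maximum-likelihood) decoding over the erasure channel, a pattern of $i$ erased symbols leads to a decoding failure precisely when the corresponding $k \times (n-i)$ submatrix of the generator matrix fails to have full row rank $k$; when it does have full rank, the $k$ input symbols are recovered with probability $1$, matching the definition of a \emph{decodable} set in Section~II. Hence, for a fixed code, the conditional failure probability given $i$ random erasures is exactly the probability that the surviving $k\times(n-i)$ submatrix is rank-deficient. Averaging this over the ensemble $\cR(n,k)$, the quantity $p(i,k)$ appearing in \eqref{BL} becomes the ensemble-averaged probability that $\tilde{\mathbf{G}}$ loses full row rank after deleting $i$ columns at random, which is precisely $p_f(i,k)$ as defined and evaluated in Lemma~\ref{lemma2}.

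The second step is simply to invoke Theorem~\ref{BLC}: since that theorem holds for \emph{any} linear code and any underlying decoder, with $p(i,k)$ being the appropriate average conditional failure probability, I can substitute $p(i,k) = p_f(i,k)$ directly into \eqref{BL}. One should also verify the boundary convention is consistent: Theorem~\ref{BLC} uses $p(i,k)=1$ for $i > n-k$, and indeed $p_f(i,k)$ from \eqref{RCpe} equals $1$ in that regime because once more than $n-k$ columns are erased the matrix $\tilde{\mathbf{G}}$ has fewer than $k$ columns and cannot have rank $k$; so only the terms $1 \le i \le n-k$ contribute nontrivially, exactly as in \eqref{BL}. This yields the claimed expression.

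There is no genuine obstacle here, as the proof is a direct composition of Theorem~\ref{BLC} and Lemma~\ref{lemma2}. The only point requiring a moment of care is the interpretive one: confirming that the ensemble-average of the per-code conditional failure probability $p(i,k)$ really equals $p_f(i,k)$. This rests on linearity of expectation together with the fact that for each fixed code the conditional failure event under MAP decoding is exactly the rank-deficiency event for the surviving submatrix; since both Theorem~\ref{BLC} and Lemma~\ref{lemma2} quantify over random erasure patterns in the same way, the two averages coincide and the substitution is justified.
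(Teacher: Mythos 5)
Your proposal is correct and follows essentially the same route as the paper: the paper's proof is a single observation that MAP decoding fails given $n-i$ unerased symbols if and only if the corresponding $k\times(n-i)$ submatrix of the generator matrix is not full row rank, an event of probability $p_f(i,k)$, after which substitution into \eqref{BL} is immediate. Your additional remarks (ensemble-averaging via linearity of expectation, and checking that $p_f(i,k)=1$ for $i>n-k$ so the boundary convention of Theorem~\ref{BLC} is respected) are correct elaborations of details the paper leaves implicit.
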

\begin{proof}
The proof is by noting that the optimal MAP decoder fails to recover the $k$ input symbols given $n-i$ unerased encoded symbols if and only if the corresponding $k\times(n-i)$ sub-matrix of the generator matrix of the code is not full row rank which occurs with probability $p_f(i,k)$.
\end{proof}

\noindent\textbf{Remark 3.} \Tref{BLC} implies that the average execution time using linear codes consists of two terms. The first term is independent of the performance of the underlying coding scheme and is fixed given $k$, $n$, and $\mu$. However, the second term is determined by the error performance of the coding scheme, i.e.,  $p(i,k)$ for $i=1,2,...,n-k$, and hence, can be minimized by properly designing the coding scheme. 

The following corollary of \Tref{BLC} demonstrates that MDS codes, if they exist,\footnote{It is in general an open problem whether given $n$, $k$, and $q$, there exists an $(n,k)$ MDS code over $\Fq$ \cite[Ch. 11.2]{macwilliams1977theory}.} are optimal in the sense that they minimize the average execution time by eliminating the second term of the right hand side in \eqref{BL}. However, for a large number of servers $n$, the field size needs to be also large, e.g., $q > n$ for Reed-Solomon (RS) codes. 

\begin{corollary}[Optimality of MDS Codes]\label{col5}
For given $n$,$k$, and underlying field size $q$, an $(n,k)$ MDS code, if exists, achieves the minimum average execution time that can be attained by any $(n,k)$ linear code.
%\begin{align}
%T_{\rm avg}^{\rm MDS}&\stackrel{(a)}{=}\frac{1}{k}+\frac{1}{\mu k}\sum_{i=n-k+1}^{n}\frac{1-q^{n-i-k}}{i}\nonumber\\
%&\stackrel{(b)}{\approx}\frac{1}{k}+\frac{1}{\mu k}\ln\left(\frac{n}{n-k}\right).
%\end{align} 
\end{corollary}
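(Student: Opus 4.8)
The plan is to read off everything from \Tref{BLC}. That theorem writes $T_{\rm avg}$ as a sum of two pieces: a term $\frac{1}{k}\left[1+\sum_{i=n-k+1}^{n}\frac{1}{i\mu}\right]$ that depends only on $n$, $k$, and $\mu$ and is therefore \emph{identical} for every $(n,k)$ linear code, together with a code-dependent term $\frac{1}{\mu k}\sum_{i=1}^{n-k}\frac{1}{i}p(i,k)$ that carries all the dependence on the particular code through its failure probabilities $p(i,k)$. Since each $p(i,k)$ is a probability it is nonnegative, and each coefficient $\frac{1}{\mu k i}$ is strictly positive; hence the second term is nonnegative for every linear code. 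This immediately yields the universal lower bound
\begin{align}
T_{\rm avg}\geq \frac{1}{k}\left[1+\sum_{i=n-k+1}^{n}\frac{1}{i\mu}\right],
\end{align}
valid for all $(n,k)$ linear codes over any field.

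The second step is to show that an $(n,k)$ MDS code attains this bound, i.e.\ that $p(i,k)=0$ for every $i\in\{1,2,\dots,n-k\}$. Here I would invoke the defining property of MDS codes: an $(n,k)$ MDS code meets the Singleton bound with equality, $d=n-k+1$, which is equivalent to the statement that every set of $k$ columns of its generator matrix is linearly independent. Consequently, whenever $i\leq n-k$ columns are erased, the surviving $n-i\geq k$ columns contain a full-rank $k\times k$ submatrix, so the punctured $k\times(n-i)$ generator matrix has full row rank $k$ and the MAP decoder recovers the $k$ inputs with probability $1$. Averaging over erasure patterns then gives $p(i,k)=0$ for every $i\leq n-k$.

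Combining the two steps, the code-dependent term in \eqref{BL} vanishes for an MDS code, so its average execution time equals the lower bound above; since no linear code can do better than this bound, MDS codes minimize $T_{\rm avg}$ over all $(n,k)$ linear codes. The only real content—and the single step I would treat carefully—is the equivalence between the MDS/Singleton condition and the full-rank property of all $k$-subsets of generator-matrix columns, which is exactly what forces $p(i,k)=0$ for $i\leq n-k$; once that equivalence is in hand, the conclusion follows purely from the additive structure of \eqref{BL}. I expect no genuine obstacle beyond stating this characterization, so the argument reduces to citing the standard MDS fact and combining it with \Tref{BLC}.
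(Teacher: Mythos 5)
Your proposal is correct and follows essentially the same route as the paper: both exploit the decomposition in \Tref{BLC}, note that the code-dependent term $\frac{1}{\mu k}\sum_{i=1}^{n-k}\frac{1}{i}p(i,k)$ is nonnegative, and show it vanishes for MDS codes because they tolerate any $n-k$ erasures. The only cosmetic difference is that the paper invokes the minimum-distance property $d_{\rm min}=n-k+1$ directly, whereas you state the equivalent generator-matrix characterization (every $k$ columns linearly independent) and spell out the implicit nonnegativity step explicitly.
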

\begin{proof}
MDS codes have the minimum distance of $d_{\rm min}^{\rm MDS}=n-k+1$ and can recover up to $d_{\rm min}^{\rm MDS}-1=n-k$ erasures leading to $p(i,k)=0$ for $i=1,2,...,n-k$. Therefore, the second term of \eq{BL} becomes zero for MDS codes and they achieve the following minimum average execution time that can be attained by any $(n,k)$ linear code:
\begin{align}\label{MDS1}
T_{\rm avg}^{\rm MDS}=\frac{1}{k}+\frac{1}{\mu k}\sum_{i=n-k+1}^{n}\frac{1}{i}.
\end{align} 

%Equivalently, the generator matrix of an MDS code remains full rank after removing up to any $n-k$ columns. 
%The optimality of MDS codes can also be proven noting that they achieve the error probability lower bound in \cite[Theorem 38]{poly} over large field sizes $q>n$. Note that Eq. \eqref{MDS1} is the same as the result established in \cite{lee2018speeding}, however, using a different methodology.
\end{proof}

Using \Tref{BLC} and Remark 3, and given that the generator matrix of any $(n,k)$ linear code with minimum distance $d_{\rm min}$ remains full rank after removing up to any $d_{\rm min}-1$ columns, we have the following proposition for the \textit{optimality criterion} in terms of minimizing the average execution time.
\begin{proposition}[Optimality Criterion]\label{prop6}
An $(n,k)$ linear code that minimizes $\sum_{i=d_{\rm min}}^{n-k}{p(i,k)/i}$ also minimizes the average execution time of a coded distributed computing system.
\end{proposition}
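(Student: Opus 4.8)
The plan is to invoke \Tref{BLC} together with Remark 3 to reduce the optimization of $T_{\rm avg}$ to the minimization of a single sum, and then account for the contribution of the terms below $d_{\rm min}$. Recall from \eqref{BL} that the average execution time splits into a fixed term depending only on $n$, $k$, and $\mu$, and a variable term proportional to $\sum_{i=1}^{n-k}\frac{1}{i}p(i,k)$. Since the first term is independent of the code, minimizing $T_{\rm avg}$ over all $(n,k)$ linear codes is equivalent to minimizing $\sum_{i=1}^{n-k}\frac{1}{i}p(i,k)$.

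The key step is to split this sum at the minimum distance. First I would write
\begin{align}\label{split}
\sum_{i=1}^{n-k}\frac{1}{i}p(i,k)=\sum_{i=1}^{d_{\rm min}-1}\frac{1}{i}p(i,k)+\sum_{i=d_{\rm min}}^{n-k}\frac{1}{i}p(i,k),
\end{align}
and then argue that the first sum on the right-hand side vanishes. This is exactly the observation stated before the proposition: the generator matrix of an $(n,k)$ code with minimum distance $d_{\rm min}$ remains full rank after deleting any set of at most $d_{\rm min}-1$ columns, so the code tolerates up to $d_{\rm min}-1$ erasures with zero decoding-failure probability. Hence $p(i,k)=0$ for all $i\leq d_{\rm min}-1$, and the lower-range sum contributes nothing. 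What remains is $\sum_{i=d_{\rm min}}^{n-k}\frac{1}{i}p(i,k)$, so minimizing this quantity is equivalent to minimizing the full variable term, which in turn minimizes $T_{\rm avg}$.

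The only subtlety worth flagging is that $d_{\rm min}$ itself depends on the chosen code, so the lower summation limit in the optimality criterion is code-dependent; the claim is simply that, for any fixed code, the terms strictly below its own minimum distance are identically zero and therefore play no role in the objective. I expect no real obstacle here: the argument is essentially a one-line consequence of the erasure-correction property of the minimum distance combined with the decomposition in \eqref{BL}. If anything, the cleanest presentation is to state \eqref{split}, cite the vanishing of $p(i,k)$ for $i<d_{\rm min}$, and conclude directly via \Tref{BLC} that minimizing the reduced sum minimizes the average execution time.
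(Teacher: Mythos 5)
Your proof is correct and follows essentially the same route as the paper, which justifies the proposition in the sentence preceding it: combine the decomposition in \eqref{BL} from \Tref{BLC} with Remark 3 and the fact that $p(i,k)=0$ for $i\leq d_{\rm min}-1$, since the generator matrix stays full rank after deleting any $d_{\rm min}-1$ columns. Your explicit split of the sum at $d_{\rm min}$ and the remark about the code-dependent lower limit are just a more formal writeup of that same argument.
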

Although MDS codes meet the aforementioned optimality criterion over large field sizes, to the best of our knowledge, the optimal linear codes, given the field size $q$ and in particular for $q=2$, per \Pref{prop6} are not known and have not been studied before, which calls for future studies.

%In the following theorem we prove that binary random linear codes \textit{asymptotically} achieve the optimal execution time, thereby demonstrating the existence of \textit{good} binary codes for distributed computation over real-valued data.
In the following theorem we characterize the gap between the execution time of binary random linear codes and the optimal execution time. Then \Cref{col8} proves that binary random linear codes \textit{asymptotically} achieve the normalized optimal execution time, thereby demonstrating the existence of \textit{good} binary codes for distributed computation over real-valued data.
The reason we compare the normalized $nT_{\rm avg}$'s instead of $T_{\rm avg}$'s is that, using \eqref{BL}, $T_{\rm avg}$ has a factor of $1/k$ and hence, $\lim_{n\to\infty}T_{\rm avg}=0$ for a fixed rate\footnote{More precisely, the coding rate over field size $q$ is equal to $k\log_2 q/n$ but with slight abuse of terminology we have dropped the factor of $\log_2 q$ since this factor is not relevant for coded distributed computing.} $R\triangleq k/n >0$.
\begin{theorem}[Gap of Binary Random Linear Codes to the Optimal Performance]\label{thm7}
Let $T_{\rm avg}^{\rm BRC}$ denote the average execution time of a coded distributed computing system using binary random linear codes. Then, for any given $k$, $n$, we have
\begin{align}\label{asymp-gap}
&\frac{1}{3\mu R(1-R)n}< |nT_{\rm avg}^{\rm MDS}-nT^{\rm BRC}_{\rm avg}|<\frac{1}{\mu R}\times\nonumber\\
&\hspace{0.8cm}\left[\frac{v(n)}{n\!-\!k\!-\!v(n)\!+\!1}\!+\!nR2^{-v(n)}\ln\left(n\!-\!k\!-\!v(n)\right)\right]\!,
\end{align}
where $R$ is the rate and $v(n)$ is an arbitrary function of $n$ with $0 \leq v(n)\leq n-k$.
%\footnote{More precisely, the coding rate over field size $q$ is equal to $k\log_2 q/n$. The capacity of a $q$-ary erasure channel is also $(1-\epsilon)\log_2 q$. Since we are dealing with the comparisons among the rate and capacity, we can simply drop the factor of $\log_2 q$; in this case, the rate and capacity are defined in the unit symbols, instead of bits,  per transmission.}
%where \eqref{asymp-gap} holds for asymptotically large $n$. 
\end{theorem}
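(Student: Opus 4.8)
The plan is to collapse the claimed two-sided bound onto a single sum that captures the entire gap, and then bound that sum from above by a split-and-dominate argument and from below by its largest term. First I would combine \Cref{RC} and \Cref{col5}. Both $T_{\rm avg}^{\rm BRC}$ and $T_{\rm avg}^{\rm MDS}$ share the first bracketed term of \eqref{BL}, while the MDS expression \eqref{MDS1} has a vanishing second term; hence the difference is exactly the second term of \eqref{BL} evaluated at $p_f$. Writing $k=Rn$, this yields
\[ |nT_{\rm avg}^{\rm MDS}-nT_{\rm avg}^{\rm BRC}| = \frac{1}{\mu R}\,\mathcal{S}, \qquad \mathcal{S}\triangleq\sum_{i=1}^{n-k}\frac{1}{i}\,p_f(i,k), \]
so after clearing the common factor $1/(\mu R)$ it suffices to sandwich $\mathcal{S}$ between $\frac{1}{3(1-R)n}$ and the bracketed quantity. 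Throughout I would use the fact, read off directly from \eqref{RCpe}, that $p_f(i,k)$ is monotonically increasing in $i$: the denominator is independent of $i$, and each numerator factor $1-2^{j-1-n+i}$ shrinks as $i$ grows.

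For the upper bound I would split $\mathcal{S}=\mathcal{S}_1+\mathcal{S}_2$ at the index $i=n-k-v(n)$, letting $\mathcal{S}_1$ collect the top $v(n)$ terms and $\mathcal{S}_2$ the rest. In $\mathcal{S}_1$ I bound $p_f\le 1$ and $1/i\le 1/(n-k-v(n)+1)$, giving $\mathcal{S}_1\le v(n)/(n-k-v(n)+1)$, which is the first bracketed term. In $\mathcal{S}_2$ I use monotonicity to pull $p_f(n-k-v(n),k)$ out of the sum, leaving the harmonic sum $\sum_{i=1}^{n-k-v(n)}1/i$, which I estimate by $\ln(n-k-v(n))$. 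It then remains to bound $p_f(n-k-v(n),k)$: substituting $i=n-k-v(n)$ into \eqref{RCpe} and dropping the denominator (which is at most $1$, and so only enlarges the subtracted ratio) gives $p_f(n-k-v(n),k) < 1-\prod_{j=1}^{k}(1-2^{-j-v(n)})$ after reindexing the numerator. Each factor exceeds $1-2^{-v(n)}$, so by Bernoulli's inequality the product is at least $(1-2^{-v(n)})^{k}\ge 1-k2^{-v(n)}$, whence $p_f(n-k-v(n),k)<k2^{-v(n)}=nR2^{-v(n)}$. Multiplying this by the harmonic estimate produces the second bracketed term.

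For the lower bound I would retain only the largest term $i=n-k$, so that $\mathcal{S}\ge \frac{1}{n-k}\,p_f(n-k,k)$, and then show $p_f(n-k,k)\ge 1/3$. At $i=n-k$ the numerator of \eqref{RCpe} becomes $\prod_{j=1}^{k}(1-2^{-j})\le 1/2$ (the $j=1$ factor alone is $1/2$ and the remaining factors are at most $1$), while the denominator $\prod_{l=1}^{k}(1-2^{l-1-n})\ge 1-2^{-n(1-R)}$ by the elementary bound $\prod(1-x_l)\ge 1-\sum x_l$; hence the subtracted ratio is at most $2/3$ once $n$ is large enough that $2^{-n(1-R)}\le 1/4$, giving $p_f(n-k,k)\ge 1/3$ and therefore $\mathcal{S}\ge \frac{1}{3(n-k)}=\frac{1}{3(1-R)n}$. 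The main obstacle is precisely this last step: one must recognize that $p_f(n-k,k)$ is essentially the singularity probability of a random $k\times k$ binary matrix, which stays bounded below by a universal constant, and one must control the correction caused by the denominator of \eqref{RCpe} sitting strictly below $1$. Everything else reduces to routine applications of the monotonicity of $p_f$, Bernoulli's inequality, and the harmonic-number estimate.
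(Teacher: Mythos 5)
Your reduction of the gap to $\mathcal{S}=\sum_{i=1}^{n-k}\frac{1}{i}p_f(i,k)$, the split $\mathcal{S}=\mathcal{S}_1+\mathcal{S}_2$ at $i=n-k-v(n)$, the bound $\mathcal{S}_1\le v(n)/(n-k-v(n)+1)$ via $p_f\le 1$, and the treatment of $\mathcal{S}_2$ (monotonicity of $p_f$ in $i$, harmonic-sum estimate, dropping the denominator of \eqref{RCpe}, reindexing, and Bernoulli's inequality to get $p_f(n-k-v(n),k)<nR2^{-v(n)}$) reproduce the paper's proof of the upper bound step for step; that part is fine (including the shared, harmless looseness of estimating $\sum_{i=1}^{m}1/i$ by $\ln m$).

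The genuine gap is in your lower bound. You bound the numerator and denominator of the ratio in \eqref{RCpe} \emph{separately}: numerator $\prod_{j=1}^{k}(1-2^{-j})\le 1/2$ and denominator $\ge 1-2^{k-n}$, so the subtracted ratio is at most $\frac{1/2}{1-2^{k-n}}$. For this to be $\le 2/3$ you need $2^{k-n}\le 1/4$, i.e., $n-k\ge 2$, which is exactly the condition you concede with ``once $n$ is large enough.'' But the theorem is claimed for \emph{any} given $k$, $n$ with $n>k$, and your argument collapses at $n=k+1$: there your ratio bound degenerates to $\frac{1/2}{1/2}=1$ and yields only $p_f(n-k,k)\ge$ (something exponentially small), not $\ge 1/3$. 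The paper avoids this by exploiting cancellation between the two products instead of bounding them separately: writing the ratio as $\prod_{j=0}^{k-1}\frac{1-2^{-k+j}}{1-2^{-n+j}}$, every factor with $j\le k-2$ is less than $1$ and can be discarded, leaving only $\frac{1-2^{-1}}{1-2^{k-n-1}}\le\frac{1/2}{1-2^{-2}}=\frac{2}{3}$ uniformly over all $n>k$, hence $p_f(n-k,k)\ge 1/3$ even when $n-k=1$. (Indeed, at $n-k=1$ the ratio telescopes to $\frac{1/2}{1-2^{-n}}$, so the information your separate bounds throw away is precisely what is needed there.) To close the gap you would have to treat $n=k+1$ by such a pairing or telescoping argument; as written, your proof establishes the stated inequality only under the unstated extra hypothesis $n-k\ge 2$.
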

\begin{proof}
Using \Cref{RC} and \Cref{col5}, we have
\begin{align}\label{S}
    \mathcal{S}\triangleq\mu R|nT_{\rm avg}^{\rm MDS}-nT^{\rm BRC}_{\rm avg}|=\sum_{i=1}^{n-k}\frac{1}{i}p_f(i,k).
\end{align}
The lower bound in \eqref{asymp-gap} is by noting that 
$$
\mathcal{S}>p_f(n-k,k)/(n-k),
$$
where $p_f(n-k,k)$ can be expressed as
\begin{align}\label{lb1}
\!p_f(n\!-\!k,k)\!=\!1\!-\!\frac{1\!-\!2^{-k}}{1\!-\!2^{-n}}\cdot\frac{1\!-\!2^{-k+1}}{1\!-\!2^{-n+1}}\!\cdot\!...\!\cdot\frac{1-2^{-1}}{1\!-\!2^{k-n-1}}.\!
\end{align}
Note that $p_f(n-k,k)=0$ for $n=k$. For $n>k$, since $1-2^{-k+j}>1-2^{-n+j}$ for $j=0,1,...,k-2$, we %can lower-bound $p_f(n-k,k)$ as
have
\begin{align}
p_f(n-k,k)\!>\!1\!-\!\frac{1-2^{-1}}{1\!-\!2^{k-n-1}}\!>\!1\!-\!\frac{1-2^{-1}}{1\!-\!2^{-1-1}}\!=\!\frac{1}{3}.
\end{align}
Therefore, $\mathcal{S}>\frac{1}{3(1-R)n}$.

To prove the upper bound, the summation in \eqref{S} is split as $\mathcal{S}=\mathcal{S}_1+\mathcal{S}_2$ where
\begin{align}
    \mathcal{S}_1\triangleq\sum_{i=n-k-v(n)+1}^{n-k}\frac{1}{i}p_f(i,k)&<\frac{v(n)}{n-k-v(n)+1},\label{S1,1}
\end{align}
and 
\begin{align}\label{S2}
\mathcal{S}_2\triangleq\sum_{i=1}^{n-k-v(n)}\frac{1}{i}p_f(i,k).
\end{align}
To upper-bound $\mathcal{S}_2$, we first note that $p_f(i,k)$, defined in \eqref{RCpe}, is a monotonically increasing function of $i$. Then,
\begin{align}
\mathcal{S}_2&\leq p_f(n-k-v(n),k)\sum_{i=1}^{n-k-v(n)}\frac{1}{i}\label{S2,1}\\
&<p_f(n-k-v(n),k)\ln\left(n-k-v(n)\right)\label{S2,2}.
\end{align}
We can further upper-bound $p_f(n-k-v(n),k)$ as
\begin{align}
p_f(n-k-v(n),k)&<1-{\prod_{j=1}^{k}(1-2^{j-1-k-v(n)})}\label{pfinf1}\\
&<1-\left[1-2^{-v(n)}\right]^k\label{pfinf2}\\
&\leq nR2^{-v(n)}\label{pfinf3},
\end{align}
where \eqref{pfinf1} is by \eqref{RCpe} together with ${\prod_{l=1}^{k}\left(1-2^{l-1-n}\right)}\leq1$, \eqref{pfinf2} follows by noting that

$$\prod_{j=1}^{k}(1-2^{j-1-k-v(n)})={\prod_{j'=1}^{k}(1-2^{-j'-v(n)})}>[1-2^{-v(n)}]^k,$$

and \eqref{pfinf3} follows by Bernoulli's inequality $(1-x)^k\geq1-kx$ for any $0<x<1$ and then inserting $k=nR$.
%Finally, using \eqref{S2,2} and \eqref{pfinf3}, one can observe that the choice of $v(n)=(1+\delta_c)\log_2(n)$ for any $\delta_c>0$ results in $\lim_{n\to\infty}\mathcal{S}_2<\lim_{n\to\infty}\frac{nR\ln(n(1-R))}{n^{1+\delta_c}}$. Therefore, with this choice of $v(n)$ we have shown that $\lim_{n\to\infty}\mathcal{S}_2=0$ and hence, $\lim_{n\to\infty}\mathcal{S}=0$. In fact any $v(n)$ that grows sub-linearly with $n$ proves that $\lim_{n\to\infty}\mathcal{S}=0$. However, given the speed of convergence of $\mathcal{S}_2$ and that of $\mathcal{S}_1$ in \eqref{S1inf}, one can observe that increasing the the growth speed of $v(n)$ decreases  the convergence speed of $\mathcal{S}_1$ to zero (increases the upper bound in \eqref{S1inf}) and increases the convergence speed of $\mathcal{S}_2$. The upper bounds on the limits of $\mathcal{S}_1$ and $\mathcal{S}_2$ coincide by the choice of $v(n)$ of the form of $c\log_2(n)$ for some constant $c$; this completes the proof of \eqref{asymp-gap}.
\end{proof}

\begin{corollary}[Asymptotic Optimality of Binary Random Linear Codes]\label{col8}
The normalized average execution time $nT_{\rm avg}^{\rm BRC}$ approaches $nT_{\rm avg}^{\rm MDS}$ as $n$ grows large. More precisely, for a given rate $R$, there exist constants $c_1,c_2>0$ such that for sufficiently large $n$, i.e., $k=nR$, we have
\begin{align}\label{asymp-gap2}
c_1\frac{1}{n}\leq |nT_{\rm avg}^{\rm MDS}-nT^{\rm BRC}_{\rm avg}|\leq c_2\frac{\log_2 n}{n}.
\end{align}
\end{corollary}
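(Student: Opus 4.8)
The plan is to obtain both inequalities in \eqref{asymp-gap2} directly from the two-sided estimate of \Tref{thm7}, whose upper bound still carries the free parameter $v(n)$. The lower bound requires no additional work: the left inequality in \eqref{asymp-gap} already reads $|nT_{\rm avg}^{\rm MDS}-nT_{\rm avg}^{\rm BRC}|>\frac{1}{3\mu R(1-R)n}$, so fixing $c_1=\frac{1}{3\mu R(1-R)}$, a positive constant once $R$ and $\mu$ are fixed, yields the claimed bound $c_1/n$. The real content of the corollary is therefore the upper bound, where the task is to choose $v(n)$ so that both bracketed terms in the right inequality of \eqref{asymp-gap} decay at the rate $\frac{\log_2 n}{n}$.

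The natural choice is $v(n)=c\log_2 n$ for a constant $c$ to be determined, and the point is that this makes the exponential factor polynomial: $2^{-v(n)}=n^{-c}$. Substituting $k=nR$, the first bracketed term $\frac{v(n)}{n-k-v(n)+1}$ behaves like $\frac{c\log_2 n}{(1-R)n}$, which is $\Theta(\log_2 n/n)$ for every fixed $c>0$. The second term $nR\,2^{-v(n)}\ln(n-k-v(n))$ simplifies to $R\,n^{1-c}\ln(n-k-v(n))$, which is of order $n^{1-c}\log n$. Thus the first term is harmless for any admissible $c$, and the entire question reduces to controlling the second.

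The key observation is the trade-off encoded in $v(n)$: enlarging it inflates the first term only linearly while shrinking the second exponentially. For the second term to be $O(\log_2 n/n)$ we need $n^{1-c}\log n=O(\log n/n)$, i.e. $n^{2-c}$ bounded, which forces $c\ge 2$; taking $v(n)=2\log_2 n$ makes the second term $\Theta(\log_2 n/n)$ as well, and the two contributions collapse into a single bound $c_2\log_2 n/n$ after absorbing the prefactor $\frac{1}{\mu R}$ together with $\frac{2}{1-R}$ and $R\ln 2$ into $c_2$. I expect the only delicate step to be checking that the logarithmic growth of $v(n)$ is precisely strong enough: any coefficient $c<2$ leaves a residual term exceeding $\log_2 n/n$, so the logarithmic gap between the $1/n$ lower bound and the $(\log_2 n)/n$ upper bound is an artifact of balancing these two competing terms rather than avoidable slack.
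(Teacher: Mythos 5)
Your proposal is correct and follows essentially the same route as the paper: take $c_1=\frac{1}{3\mu R(1-R)}$ from the lower bound of \Tref{thm7}, and set $v(n)=2\log_2 n$ in its upper bound so that both bracketed terms become $O\!\left(\frac{\log_2 n}{n}\right)$. The only point the paper makes explicit that you leave implicit is the admissibility check $v(n)\leq n-k$, i.e., $n(1-R)\geq 2\log_2 n$ for sufficiently large $n$, which is immediate.
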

\begin{proof}
The lower bound holds with $c_1=1/{3\mu R(1-R)}$ according to the left hand side of \eq{asymp-gap}. Observe that with the choice of $v(n)=2\log_2 n$ both terms in the right hand side of \eq{asymp-gap} become $O(\frac{\log_2 n}{n})$. Note that $n-k=n(1-R) \geq 2\log_2 n$, for sufficiently large $n$. Hence, the upper bound of \eq{asymp-gap2} also holds with a proper choice of $c_2$.  % in the right hand  Moreover, given that $v(n)$ sub-linearly grows with $n$, we have
%\begin{align}
%\lim_{n\to\infty}\mathcal{S}_1&<\frac{v(n)}{n-k}=\frac{1}{1-R}\frac{v(n)}{n}\label{S1inf-new},\\
%\lim_{n\to\infty}\mathcal{S}_2&<nR2^{-v(n)}.\ln(n-k)\label{S2inf-new}
%\end{align}
%Using \eqref{S2inf-new} one can observe that the choice of $v(n)=(2+\delta_c)\log_2(n)$ for any $\delta_c>0$ results in $\lim_{n\to\infty}\mathcal{S}_2<\lim_{n\to\infty}\frac{nR\ln(n(1-R))}{n^{1+\delta_c}}$. Therefore, with this choice of $v(n)$ we have shown that $\lim_{n\to\infty}\mathcal{S}_2=0$ and hence, $\lim_{n\to\infty}\mathcal{S}=0$. In fact any $v(n)$ that grows sub-linearly with $n$ proves that $\lim_{n\to\infty}\mathcal{S}=0$. 
%Using \eqref{S1inf-new} and \eqref{S2inf-new} one can observe that increasing the growth speed of $v(n)$ decreases the convergence speed of $\mathcal{S}_1$ to zero (increases the upper bound in \eqref{S1inf-new}) and increases the convergence speed of $\mathcal{S}_2$. The upper bounds on the limits of $\mathcal{S}_1$ and $\mathcal{S}_2$ coincide by the choice of $v(n)=(2+\delta_c)\log_2(n)$ for $\delta_c\to0$; this completes the proof of the upper bound in \eqref{asymp-gap2}. Therefore, using \eqref{asymp-gap2}, $\lim_{n\to\infty}nT_{\rm avg}^{\rm BRC}=\lim_{n\to\infty}nT_{\rm avg}^{\rm MDS}$.
\end{proof}

\noindent\textbf{Remark 4.}
 \label{optimumrate}
% Authors in \cite{lee2018speeding} have shown that the asymptotically-optimal rate  $R^*$ for MDS-coded distributed computing system  is  the solution to 
Using \eqref{MDS1} and a similar approach to \cite{lee2018speeding}, one can show that the asymptotically-optimal encoding rate  $R^*$ for an MDS-coded distributed computing system  is  the solution to 
 \begin{align}\label{RR}
(1-R^*)\ln(1-R^*)=\mu(1-R^*)-R^*.
\end{align}
 \Cref{col8} implies that for distributed computation using binary random  linear codes, the gap of $nT^{\rm BRC}_{\rm avg}$ to $nT^{\rm MDS}_{\rm avg}$ converges to zero as $n$ grows large. Accordingly, the optimal encoding rate also approaches $R^*$, described in \eqref{RR}.
 
 % \Tref{thm7} implies that the optimal value of $T_{\rm avg}$ can be attained by utilizing binary linear codes with the same optimal rate characterized by \eqref{RR}. Hence, one can avoid  the problems that  MDS-coded computing systems face by utilizing a binary linear code as the underlying coding scheme without loss in the performance of such systems.   

  %%%%%%%%%%%%%%%%%%%%%%%%%%%%%%%%%%%%%%%
\iffalse  
 any homogeneous distributed computing system, using a capac code
The asymptotically-optimal encoding rate $R^*$ of any homogeneous distributed computing system, using a capacity-achieving linear code, comprised of $n$ local nodes under shifted-exponential model with straggling parameter $\mu$ can be obtained as the solution of the following equation.

Accordingly, the optimal value of $k$ for the underlying $(n,k)$ coding scheme will be $k^*=[nR^*]$, where $[\cdot]$ denotes the nearest integer function. 

\begin{proof}
In the limit of $n\to\infty$, the error probability of the underlying capacity-achieving code takes a sharp transition at the channel capacity $C=1-\epsilon$, i.e., $P_e(\epsilon,n)=0$ for $R<C$ and $P_e(\epsilon,n)=1$ for $R>C$. 
%Note that $R<C$ is equivalent to $\epsilon<1-R$.
Using \eqref{ET2} we have
\begin{align}\label{asymp}
    T_{\rm avg}^{(\infty)}=\frac{1}{k}+\frac{1}{\mu k}\int_{1-R}^{1}\frac{1}{\epsilon}d\epsilon=\frac{1}{nR}\left[1-\frac{\ln(1-R)}{\mu}\right],
\end{align}
where we substituted $k$ by $nR$. Now, by equating the derivative of $T_{\rm avg}^{(\infty)}$ with zero we get the optimal encoding rate as the solution of \eqref{RR}.
\end{proof}
\fi%%%%%%%%%%%%%%%%%%%%%%%%%%%%%%%%%%%%%%
%\noindent\textbf{Remark 5.}\label{goldennumber} In the special case of $\mu=1$, by numerically solving \eqref{RR}, we have for the asymptotically-optimal encoding rate $R^*=0.6822$.

%\section{Polar-Coded Distributed Computing}
%In order to consider a binary linear code that can properly work over real-valued data, with a practically low encoding and decoding complexity of order $O(n\log n)$, in this section we investigate polar codes in the context of coded distributed computation. In particular, given that the characterization of $p(i,k)$ defined in \eqref{BL} is cumbersome, we explain the encoding and decoding procedure of real-valued data over polar codes and delineate how we can obtain the average execution times using Lemma 1. 
%We further characterize their performance, in terms of the optimality gap delineated per Proposition 5, through extensive numerical results in the next section.
%\subsection{Encoding Procedure}
%\subsection{Decoding Procedure}
%\subsection{Performance Characterization}

\begin{table*}
     		\centering
     		\caption{Average execution time and optimal $k^*$ values for different coding schemes as well as their gap $g_{\rm opt}$ to the optimal performance and their performance improvement gain ${G}_{\rm cod}$ compared to the uncoded computing.}
     		\label{table}\vspace{-0.2cm}
   		\begin{tabular}{||M{0.09in}||M{0.65in}|M{0.8in}|M{1.1in}|M{1.1in}|M{1.1in}|M{1.1in}||}
     			\hline\hline
     \vspace{0.1cm}	$\!\!n$ & 
    {Uncoded}&
     {MDS coding}&
    {Binary random coding}&
     {Polar coding with SC}&
    {Polar coding with ML}&
    {RM coding with ML}\\ \cline{2-7}

    & $(T_{\rm avg},g_{\rm opt})$ &
     $(T_{\rm avg},k^*,{G}_{\rm cod})$ & 
     $(T_{\rm avg},k^*,g_{\rm opt},{G}_{\rm cod})$ & 
    $(T_{\rm avg},k^*,g_{\rm opt},{G}_{\rm cod})$ &
    $(T_{\rm avg},k^*,g_{\rm opt},{G}_{\rm cod})$
    &
    $(T_{\rm avg},k^*,g_{\rm opt},{G}_{\rm cod})$
    \\ \hline	\hline	
     			$\!\!8$  & $(0.4647,25\%)$& $(0.370,6,20\%)$&$(0.460,7,25\%,1.1\%)$&$(0.412,7,11\%,12\%)$&$(0.40,7,5.5\%,16\%)$&$(0.389,7,5.1\%,16\%)$\\\hline
     			$\!\!16$  & $(0.2738,44\%)$& $(0.191,11,31\%)$&$(0.226,11,18\%,18\%)$&$(0.217,11,14\%,21\%)$&$(0.199,11,4.2\%,28\%)$&$\!(0.198,11,3.6\%,28\%)$\\ \hline
     				$\!\!32$  & $(0.1581,63\%)$& $(0.0968,22,39\%)$&$(0.105,21,8.6\%,34\%)$&$(0.114,24,18\%,28\%)$&$(0.105,26,7.9\%,34\%)$&$\!(0.104,26,7.2\%,34\%)$\\ \hline
     					$\!\!64$  & $(0.0897,84\%)$& $(0.0488,44,46\%)$&$(0.051,43,3.9\%,44\%)$&$(0.0584,44,20\%,35\%)$&$\!(0.0533,46,9.4\%,41\%)$&$\!(0.050,42,2.6\%,44\%)$\\ \hline
     						$\!\!\!128$  & $\!\!(0.0503,105\%)$& $\!(0.0245,88,51\%)$&$\!(0.025,87,1.9\%,50\%)$&$\!(0.0293,88,19\%,42\%)$&$\!(0.0255,91,4.2\%,50\%)$&$\!(0.0252,97,2.8\%,50\%)$\\ \hline
     							$\!\!\!256$  & $\!\!(0.0278,127\%)$& $\!\!(0.0123,175,56\%)$&$\!\!(0.0124,174,0.9\%,56\%)$&$\!\!(0.0146,182,19\%,48\%)$&$\!\!(0.0129,186,5.5\%,54\%)$&$\!\!\!(0.0123,166,0.6\%,56\%)$\\ \hline
     								$\!\!\!512$  & $\!\!(0.0153,149\%)$& $\!\!\!(0.0061,350,60\%)$&$\!\!\!(0.0062,349,0.5\%,60\%)$&$\!\!(0.0073,388,19\%,52\%)$&$\!\!\!(0.0065,393,5.9\%,57\%)$&$\!\!\!(0.0061,353,0.1\%,60\%)$\\ \hline\hline
\end{tabular}
\vspace{-0.4cm}
\end{table*}

\section{Practical Codes and Simulation Results}
In this section, simulation results for the expected-time performance of various coding schemes over distributed computing systems are presented. In particular, their gap to the optimal performance are shown and also, their performance gains are compared with the uncoded computation. %In addition to the fundamental results presented in Section III, we also provide the simulation results for polar- and RM-coded distributed computation, briefly explained in the following.
\subsection{Polar-Coded Distributed Computation}
Binary polar codes are capacity-achieving linear codes with explicit constructions and low-complexity encoding and decoding \cite{Arikan}. Also, the low-complexity $O(n\log n)$ encoding and decoding of polar codes can be adapted to work over real-valued data when dealing with erasures as in coded computation systems, as also noted in \cite{bartan2019polar}. Next, we briefly explain the encoding and decoding procedure of real-valued data using binary polar codes and delineate how we can obtain the average execution times using \Lref{lemma1}. 

%Given that the characterization of $p(i,k)$ defined in \eqref{BL} for polar codes is cumbersome,
%In order to encode the $k$ divided tasks of a large computational job, for every $k=1,2,...,n-1$ given $n$ local servers, using polar codes, we first need to construct the $k\times n$ generator matrix of the $(n,k)$ binary linear polar code. To this end, 

\subsubsection{Encoding Procedure} Ar{\i}kan's $n\times n$ polarization matrix $\mathbf{G}_n=\begin{bmatrix}
1 & 0\\ 
1 & 1
\end{bmatrix}^{\otimes r}$ is considered, where $r=\log_2 n$ and $\mathbf{A}^{\otimes r}$ denotes the $r$-th Kronecker power of $\mathbf{A}$. Next, a design parameter $\epsilon_d$ is picked, as specified later in Section IV-C. Then the polarization transform $\mathbf{G}_n$ is applied to a binary erasure channel with erasure probability $\epsilon_d$, BEC$(\epsilon_d)$. The erasure probabilities of polarized bit-channels, denoted by $\{Z_i\}_{i=1}^n$, are sorted and the $k$ rows of $\mathbf{G}_n$ corresponding to the indices of the $k$ smallest $Z_i$'s are picked to construct the $k\times n$ generator matrix $\mathbf{G}$. The encoding procedure using the resulting $k\times n$ generator matrix $\mathbf{G}$, which also applies to any $(n,k)$ binary linear code operating over real-valued data, is as follows. First, the computational job is divided into $k$ smaller tasks. Then the $j$-th encoded task which will be sent to the $j$-th node, for $j=1,2,\dots,n$, is the sum of all tasks $i$'s for which the $(i,j)$-th element of $\mathbf{G}$ is $1$.
\subsubsection{Decoding Procedure} The recursive structure of polar codes can be applied for low-complexity detection/decoding of real-valued data using parallel processing for more speedups \cite{jamali2018low}. It is well-known that in the case of successive cancellation (SC) decoding over BECs, the probability of decoding failure of polar codes is $P_e^{\rm SC}(\epsilon,n)=1-\prod_{i\in\mathcal{A}}(1-Z_i)$, where $\mathcal{A}$ denotes the set of indices of the selected rows.

%This brings an interesting trade-off since a lower error probability results in a lower execution time (c.f., \eqref{ET2}) while a higher complexity slows down the overall end-to-end processing time.

\noindent\textbf{Remark 5.} Since polar SC decoder is sub-optimal in terms of successful decoding performance, one can think of optimal maximum-likelihood (ML) decoder to attain a lower failure probability at the cost of higher complexities. Consequently, investigating the possibility of attaining close-to-ML performance, e.g., using SC list decoding of polar codes \cite{tal2015list}, over real-valued data is an interesting problem deserving future studies when taking into account all time-consuming components of a coded distributed computing system.
\subsubsection{Performance Characterization} Given the decoding method adopted we can find the average execution time using \Lref{lemma1}. In particular, when SC decoding is adopted, $T_{\rm avg}$ can be obtained by numerically evaluating the integral of \eqref{ET2} involving $P_e^{\rm SC}(\epsilon,n)$. Moreover, for the ML decoding, we first estimate the error probability $P_e^{\rm ML}(\epsilon,n)$ using Monte-Carlo (MC) simulations and then apply \eqref{ET2}. %To do the MC simulation, for any $\epsilon\in[0,1]$ for which the integral of \eqref{ET2} should be numerically calculated, we erase the columns of the generator matrix $\mathb{G}$, constructed in Section IV-A1, with probability $\epsilon$ and check if the resulting matrix is full row rank; if not, an error is counted. By repeating this procedure over a sufficiently large sample space we estimate $P_e^{\rm ML}(\epsilon,n)$ as the normalized number of errors, i.e., cases for which the resulting matrix, after erasing the columns with probability $\epsilon$, is not full row rank.

\subsection{RM-Coded Distributed Computation} 

RM codes are closely related to polar codes, where for an $(n,k)$ RM code the generator matrix $\mathbf{G}$ is constructed by choosing the $k$ rows of $\mathbf{G}_n$ (defined in Section IV-A1) having the largest Hamming weights. It is recently shown that RM codes are capacity achieving over BECs \cite{kudekar2017reed}, though under bit-MAP decoding, and numerical results suggest that they actually achieve the capacity with \textit{almost} optimal scaling \cite{hassani2018almost}. There is still a considerable interest in constructing low-complexity
decoding algorithms for RM codes attaining such performances. In this paper, we apply the MC-based simulation to estimate $P_e^{\rm ML}(\epsilon,n)$ for RM codes with the optimal ML decoder, and then evaluate their execution time, numerically, using \eqref{ET2}. The inspiration behind considering RM codes in this paper is that they are believed to have the \textit{almost} optimal scaling which, we conjecture, is sufficient for asymptotic optimality, similar to random linear codes in \Cref{col8}, for coded distributed computing. The simulation results, provided next, support this conjecture.

%In fact, as shown in \cite{hassani2018almost}, the transition width of the error probability of RM codes at the channel capacity from $\delta_t$ to $1-\delta_t$, for any $\delta_t>0$, can be as small as $\Theta(1/n^{0.5-\kappa})$, for any $\kappa>0$, which is very close to the smallest possible transition width $\Theta(1/n^{0.5})$ that any code can have. Therefore, we expect that RM codes confirm our claim in \Tref{capacityachieving} and achieve the optimal execution time at smaller number $n$ of servers than polar codes. 

  \begin{figure}
	\centering
	\includegraphics[trim={0cm 0.5cm 0cm 0cm},width=3.6in]{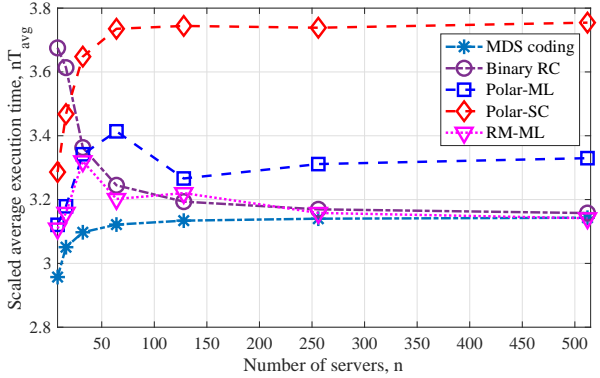}
	\caption{Scaled average execution time of a homogeneous distributed computing system with $\mu=1$ using various coding schemes for finite number of servers $n=8$, $16$, $32$, $64$, $128$, $256$, and $512$.}
	\label{F1}
\vspace{-0.15in}
\end{figure}
\begin{figure}
	\centering
	\includegraphics[trim={0cm 0.5cm 0cm 0cm},width=3.6in,height=2in]{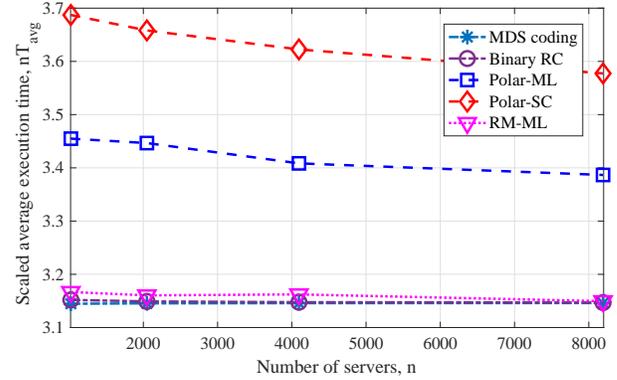}
	\caption{Scaled average execution time of a homogeneous distributed computing system with $\mu=1$ using various coding schemes for asymptotically large number of servers $n=1024$, $2048$, $4096$, and $8192$.}
	\label{F2}
\vspace{-0.15in}
\end{figure}
  
\subsection{Simulation Results}
 Numerical results for the performance  of the coded distributed computing systems utilizing MDS codes, binary  random linear codes, polar codes, and RM codes, are presented in Table \ref{table} and are compared with the uncoded scenario over  small block-lengths. We assume $\mu=1$ for all numerical results in this section. For MDS and random linear codes, $T_{\rm avg}$ is calculated  using  \eqref{MDS1} and \Cref{RC}, respectively,  and for polar and RM codes, it is  numerically evaluated using   \eqref{ET2} as discussed in Sections IV-A and IV-B. Then $k^*$ is obtained by minimizing $T_{\rm avg}$ for all possible values of $k$. We designed the polar code with  $\epsilon_d =0.1$, which is observed to be good enough for this range of block-lengths but one can also attain slightly better performance for  polar codes by optimizing over $\epsilon_d$ specifically for each $n$. Characterizing the best  $\epsilon_d$ as a function of  block-length $n$ is left for the future work. In Table I, $G_{\rm cod}$ is defined as the percentage of the gain in $T_{\rm avg}$ compared to the uncoded scenario and $g_{\rm opt}$ is defined as the gap of $T_{\rm avg}$ for  the underlying coding scheme  to that of MDS codes, in percentage. Intuitively, $G_{\rm cod}$ for a coding scheme determines \emph{how much gain} this scheme attains and $g_{opt}$ indicates  \emph{how close} this scheme is to the optimal solution. 
 %Observe that polar code with SC decoder  has a significantly large $G_{\rm cod}$ compared to the optimal value of $G_{\rm cod}$, e.g., $52\%$ for $n=512$ versus $60\%$ for the MDS code. Decoding polar codes with ML decoder also improves this result accordingly, e.g., $g_{\rm opt}=5.5\%$ for $n=256$.    
 Observe that polar codes with the low-complexity SC decoder achieve large enough $G_{\rm cod}$'s, close to the optimal values of $G_{\rm cod}$, e.g., $52\%$ for $n=512$ versus $60\%$ for the MDS code. Closer performance to the optimal $T_{\rm avg}$ can be obtained by decoding polar codes with ML decoder, e.g., $g_{\rm opt}=5.5\%$ for $n=256$.    
  Figure \ref{F1} shows that random linear codes have weak performance in the beginning  but they quickly approach the optimal $T_{\rm avg}$ so that they have small gaps to the optimal values, e.g., $g_{\rm opt}=0.5\%$ for $n=512$. Also, observe that RM codes always outperform polar codes since, perhaps, they have better distance distribution leading to better $p(i,k)$'s defined in \Tref{BLC}.
  
  In the case of $\mu=1$, by numerically solving \eqref{RR}, we have for the asymptotically-optimal encoding rate $R^*=0.6822$. 
 %By Remark 5, the optimal encoding rate  for any capacity-achieving code approaches $R^*=0.6822$ as $n \to\infty $ (for $ \mu=1$). 
 Motivated by this fact, in Figure 2, the rate of all discussed underlying coding schemes is fixed to $R^*$ and  $nT_{\rm avg}$ is plotted for moderately large block-lengths, i.e., $T_{\rm avg}$ is not optimized over  rates for the results demonstrated in this plot. Additionally,  the polar code is  designed with $\epsilon_d=1-R^*=0.3178$, which makes the code to be capacity-achieving for an erasure channel with capacity equal to $R^*$. Note that there is still a gap between  polar codes  with ML decoder and MDS codes. We believe this is due to the fact that binary polar codes with the $2 \times 2$ polarization kernel do not have an optimal scaling exponent \cite{hassani2014finite}. Furthermore, Figure \ref{F2} suggests that RM codes attain the optimal performance, and also do so relatively fast, supporting our conjecture in Section IV-B.

\section{Conclusion}

In this paper, we presented a coding-theoretic approach toward coded distributed computing systems by connecting the problem of characterizing their average execution time to the traditional problem of finding the error probability of a coding scheme over erasure channels. Using this connection, we provided results on the performance of coded distributed computing systems, such as their best performance bounds and asymptotic results using binary random linear codes. We further analyzed the performance of polar and RM codes in the context of distributed computing systems. We conjecture that achieving the capacity of BECs with optimal scaling exponent is a sufficient condition for binary codes to be asymptotically optimal, in the sense defined in \Tref{thm7}. We have shown this for binary random linear codes which are well-known to have optimal scaling exponent, even with sparse generator matrices \cite{mahdavifar2017scaling}, and numerically verified this for RM codes by observing that they attain close to optimal performance using a moderate number of servers. It is also interesting to see whether having an optimal scaling exponent is also a necessary condition for codes to be asymptotically optimal, e.g., whether binary polar codes with the $2 \times 2$ polarization kernel are asymptotically optimal or not.

\bibliographystyle{IEEEtran}

\bibliography{IEEEabrv}
\end{document}